\numberwithin{equation}{section}
\numberwithin{figure}{section}
\theoremstyle{plain}
\newtheorem{thm}{Theorem}
  \theoremstyle{definition}
  \newtheorem{defn}[thm]{Definition}
  \theoremstyle{plain}
  \newtheorem{prop}[thm]{Proposition}
  \theoremstyle{plain}
  \newtheorem*{thm*}{Theorem}
\begin{document}

\title{A probabilistic Approach to some results by Nieto and Truax}

\author{C. Vignat}

\address{Laboratoires des Signaux et Systèmes, Université d'Orsay, France}
 \email{christophe.vignat@lss.supelec.fr}
\maketitle

\section{Introduction}

In this paper, we revisit some results by Nieto and Truax about generating
functions for arbitrary order coherent and squeezed states. These
results were obtained using the exponential of the Laplacian operator;
they were later extended by Dattoli et al. \cite{Dattoli} using more
elaborated operational identities. In this paper, we show that the
operational approach can be replaced by a simple probabilistic approach,
in the sense that the exponential of derivatives operators can be
replaced by equivalent expectation operators. This approach brings
new insight about the links between operational and probabilistic
calculus. 

In the first part, we show that the exponential of the derivation
operator of arbitrary integer order can be replaced by an expectation
with repect to a carefuly chosen random variable.

In the second part, we apply this result to the Gould-Hopper polynomials,
which include as special cases the Kampé de Fériet and the Hermite
polynomials; this allows us to recover and interpret easily some properties
of these polynomials. 

The last part is dedicated to the application of these probabilistic
representations to the computation of coherent and squeezed states.

\section{The operator $\exp\left[\left(c\frac{d}{dx}\right)^{j}\right]$\label{sec:operator}}

\subsection{Introduction}

In \cite{Nieto}, Nieto and Truax consider the operators
\[
I_{j}=\exp\left[\left(c\frac{d}{dx}\right)^{j}\right]
\]
 where $c$ is a constant and $j$ an integer. They remark that, for
any well-behaved function$f,$ $I_{1}$ acts as the translation operator\[
I_{1}f\left(x\right)=f\left(x+c\right).\]
 Moreover, $Z_{2}$ being a Gaussian random variable with variance
$2$, $I_{2}$ acts as the Gauss-Weierstrass transform\[
I_{2}f\left(x\right)=E_{Z_{2}}f\left(x+cZ_{2}\right).\]
 Since $I_{1}f$ can be also written as $E_{Z_{1}}f\left(x+Z_{1}\right)$
where $Z_{1}$ is the deterministic variable equal to $1,$ it is
tempting to wonder if the expression\[
I_{j}=E_{Z_{j}}f\left(x+cZ_{j}\right)\]
 holds for values of $j\ge3$ and to study what random variable $Z_{j}$
possibly comes out in this formula. In \cite{Nieto}, the following
general expression is proposed\begin{eqnarray}
I_{j}f\left(x\right) & = & \frac{1}{\left(2\pi\right)^{\frac{j-1}{2}}\sqrt{j}}\int_{0}^{+\infty}\frac{dx_{1}e^{-x_{1}}}{x_{1}^{\frac{1}{j}}}\int_{0}^{+\infty}\frac{dx_{2}e^{-x_{2}}}{x_{2}^{\frac{2}{j}}}\dots\int_{0}^{+\infty}\frac{dx_{j-1}e^{-x_{j-1}}}{x_{j-1}^{\frac{j-1}{j}}}\label{eq:Im}\\
 & \times & \sum_{l=1}^{j}f\left(x+jc\left(x_{1}x_{2}\dots x_{j-1}\right)^{\frac{1}{j}}\exp\left(2i\pi\frac{l}{j}\right)\right)\nonumber \end{eqnarray}
 and the authors add: {}``This result, although in closed form, is
usually too complicated to yield results in terms of elementary functions''. The next section is devoted to providing a simple expression for the operator $I_{j}.$

\subsection{a complex valued stable random variable}

We first introduce the following random variables $W_{j}$ and $Z_{j}$
where in this whole section, $j$ is an integer such that $j\ge2.$
\begin{defn}
For $j\in\mathbb{N},$ $j\ge2,$ define the root of the unity \[
\omega_{j}=\exp\left(\imath\frac{2\pi}{j}\right)\]
 and the random variable $W_{j}$ as\begin{equation}
\Pr\left\{ W=\omega_{j}^{l}\right\} =\frac{1}{j},\,\,\,0\le l\le j-1.\label{eq:W}\end{equation}
 Thus $W_{j}$ equals equiprobably all values of the $j-$roots of
the unity. The random variable $Z_{j}$ is then defined as \[
Z_{j}=W_{j}S_{\frac{1}{j}}^{-\frac{1}{j}}\]
 where $S_{\frac{1}{j}}$ is a stable random variable \cite{Feller} with characteristic  parameter
$\frac{1}{j}$, independent of $W_{j}.$

In the following, when possible, we'll discard the indices for notational
simplicity and write simply\begin{equation}
Z=WS^{-\frac{1}{j}}.\label{eq:Z}\end{equation}

We now describe some properties of the random variable $Z.$\end{defn}
\begin{prop}
The random variable $Z$ has moments\begin{equation}
EZ^{k}=\begin{cases}
0 & \text{if}\,\, k\ne pj\\
\frac{\left(pj\right)!}{p!} & \text{if}\,\, k=pj,\, p\in\mathbb{N}.\end{cases}\label{eq:Zmoments}\end{equation}
 \end{prop}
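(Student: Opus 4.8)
The plan is to compute $EZ^{k}$ by exploiting the product structure $Z = W S^{-1/j}$ with $W$ and $S$ independent, so that $EZ^{k} = (EW^{k})\,(ES^{-k/j})$. The first factor is elementary: since $W$ is uniform on the $j$-th roots of unity, $EW^{k} = \frac{1}{j}\sum_{l=0}^{j-1}\omega_{j}^{lk}$ is a geometric sum that equals $1$ when $j \mid k$ and $0$ otherwise. This already accounts for the vanishing of $EZ^{k}$ when $k$ is not a multiple of $j$, so it remains only to evaluate $ES^{-k/j}$ when $k = pj$, i.e. to show $E S^{-p} = \frac{(pj)!}{p!}$ for $p \in \mathbb{N}$.

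For the second factor I would recall the standard fact (see \cite{Feller}) that a positive stable random variable $S = S_{\alpha}$ with characteristic parameter $\alpha = \frac{1}{j} \in (0,1)$ has Laplace transform $E e^{-uS} = e^{-u^{\alpha}}$, and more to the point has negative (inverse) moments given by a ratio of Gamma functions. The cleanest route is the identity
\[
E S^{-s} = \frac{1}{\Gamma(s)}\int_{0}^{\infty} u^{s-1} E e^{-uS}\, du = \frac{1}{\Gamma(s)}\int_{0}^{\infty} u^{s-1} e^{-u^{\alpha}}\, du,
\]
valid for $s > 0$, which after the substitution $v = u^{\alpha}$ collapses to
\[
E S^{-s} = \frac{\Gamma(s/\alpha + 1)}{\alpha\,\Gamma(s)\cdot (s/\alpha)} \cdot \alpha = \frac{\Gamma(s/\alpha)}{\alpha\,\Gamma(s)} \cdot \frac{1}{?}
\]
— I will not belabor the constant-chasing here; the point is that one gets $E S^{-s} = \Gamma(1 + js)/\Gamma(1+s)$ or the equivalent with $\Gamma(js)/\Gamma(s)$ depending on normalization. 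Specializing $s = p$ and $\alpha = 1/j$ then yields $E S^{-p} = \frac{\Gamma(1+pj)}{\Gamma(1+p)} = \frac{(pj)!}{p!}$, which is exactly the claimed value.

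Combining the two factors gives $EZ^{pj} = 1 \cdot \frac{(pj)!}{p!}$ and $EZ^{k} = 0$ otherwise, establishing \eqref{eq:Zmoments}. The main obstacle is purely bookkeeping: pinning down the precise normalization of the stable law $S_{1/j}$ used in \cite{Feller} so that the Gamma-function manipulation lands on the clean integer value $\frac{(pj)!}{p!}$ rather than a constant multiple of it. (Equivalently, one can sidestep the inverse-moment integral entirely by checking that the moment sequence $\frac{(pj)!}{p!}$ is precisely the one that makes $E e^{cZ} = \sum_{p\ge 0} \frac{c^{pj}}{p!} = e^{c^{j}}$, i.e. $Z$ is characterized by $E e^{cZ} = \exp(c^{j})$ — this is the reformulation that will drive the operator identity $I_j f(x) = E f(x + cZ)$ in the next section, so it is worth recording.) I would choose whichever of the two verifications is shorter given the conventions already fixed in the paper.
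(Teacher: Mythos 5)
Your proposal follows the paper's proof almost verbatim in structure: factor $EZ^{k}=EW^{k}\,ES^{-k/j}$ by independence, kill the non-multiples of $j$ with the roots-of-unity sum, and reduce to the inverse moment $ES^{-p}$ of the stable law. The only divergence is at that last step: the paper simply cites Williams for $ES^{-p}=\frac{(pj)!}{p!}$, whereas you re-derive it from the Laplace transform. Your derivation is the right idea but is left unfinished (the displayed formula with the ``?'' is not correct as written); for the record, the constant-chasing does close cleanly. With the standard normalization $Ee^{-uS_{\alpha}}=e^{-u^{\alpha}}$, $\alpha=\tfrac{1}{j}$, one has
\begin{equation*}
ES^{-s}=\frac{1}{\Gamma\left(s\right)}\int_{0}^{\infty}u^{s-1}e^{-u^{1/j}}\,du
=\frac{j\,\Gamma\left(js\right)}{\Gamma\left(s\right)}=\frac{\Gamma\left(1+js\right)}{\Gamma\left(1+s\right)},
\end{equation*}
via $v=u^{1/j}$, so $ES^{-p}=\frac{(pj)!}{p!}$ exactly, with no stray constant; this is precisely Williams' formula, so your route just makes the cited fact self-contained.

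One caution on your parenthetical alternative: verifying that the moment sequence $\frac{(pj)!}{p!}$ reproduces $Ee^{cZ}=\exp(c^{j})$ does not prove the proposition, since in the paper that generating function is \emph{deduced from} these moments (and in any case the alternative never touches the specific variable $Z=WS^{-1/j}$ whose moments are being claimed). So the inverse-moment computation, either cited or derived as above, cannot be sidestepped.
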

\begin{proof}
The $k-th$ order moment of $Z$ is\[
EZ^{k}=EW^{k}E\frac{1}{S^{\frac{k}{j}}}\]
 with \[
EW^{k}=\frac{1}{j}\sum_{l=0}^{j-1}\omega_{j}^{kl}=\begin{cases}
0 & \text{if}\,\, k\ne pj\\
1 & \text{if}\,\, k=pj,\, p\in\mathbb{N}\end{cases}\]

Moreover, with $k=pj,$\[
E\frac{1}{S^{\frac{k}{j}}}=E\frac{1}{S^{p}}=\frac{\left(pj\right)!}{p!}\]
 (see \cite{Williams}) so that the result follows. 
\end{proof}
As a consequence, the generating function $\varphi_{Z}\left(u\right)=E\exp\left(uZ\right)$
of $Z$ can be evaluated as follows. 
\begin{prop}
The generating function of $Z$ is\[
\varphi_{Z}\left(u\right)=\exp\left(u^{j}\right),\,\,\, u\ge0.\]
\end{prop}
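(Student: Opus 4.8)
The plan is to compute $\varphi_Z(u) = E\exp(uZ)$ by expanding the exponential in a power series and substituting the moments obtained in the previous proposition. Writing $\exp(uZ) = \sum_{k\ge 0} \frac{u^k}{k!} Z^k$ and taking expectations term by term, the moment formula \eqref{eq:Zmoments} kills every term except those with $k = pj$, leaving
\[
\varphi_Z(u) = \sum_{p\ge 0} \frac{u^{pj}}{(pj)!}\, EZ^{pj} = \sum_{p\ge 0} \frac{u^{pj}}{(pj)!}\cdot\frac{(pj)!}{p!} = \sum_{p\ge 0} \frac{(u^j)^p}{p!} = \exp(u^j).
\]
So the identity falls out immediately once the interchange of sum and expectation is justified.

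The step I expect to be the main obstacle is precisely that justification of interchanging $E$ and the infinite sum, since $Z = W S^{-1/j}$ is complex-valued and $S^{-1/j}$ has only finitely many finite moments is \emph{not} the issue — in fact all moments $EZ^{pj} = (pj)!/p!$ are finite — but these moments grow fast enough that one should check the series actually converges and that dominated (or monotone) convergence applies. The cleanest route is to bound $|E\exp(uZ)| \le E\exp(u|Z|) = E\exp(u S^{-1/j})$ (using $|W|=1$), and then observe that $S^{-1/j}$, being a nonnegative random variable, has $E\exp(u S^{-1/j}) = \sum_p \frac{u^{pj}}{(pj)!}\,\frac{(pj)!}{p!}\cdot(\text{ratio of }E S^{-k/j}\text{ terms})$; more simply, since $S^{-1/j} \ge 0$, Tonelli's theorem applies to the series with nonnegative terms $\frac{u^k}{k!}E(S^{-k/j})$ for $u \ge 0$, giving $E\exp(u S^{-1/j}) = \sum_k \frac{u^k}{k!} E(S^{-k/j})$, and this dominating series is finite because its value is a convergent hypergeometric-type series (one can see convergence directly from the known asymptotics of $E S^{-k/j}$, or simply note it must be finite since the final answer $\exp(u^j)$ is). Having established absolute convergence of the dominating series, dominated convergence legitimizes the term-by-term expectation for the original complex series, and the computation above goes through for all $u \ge 0$.

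An alternative, perhaps even cleaner, approach avoids moments entirely: condition on $W$ and use the known Laplace-transform characterization of the positive stable law. For a stable subordinator $S = S_{1/j}$ with index $\alpha = 1/j$, one has $E\exp(-t S) = \exp(-t^{1/j}) = \exp(-t^\alpha)$ for $t \ge 0$; substituting $t$ by a suitable power and using the scaling $S^{-1/j}$ turns this into $E\exp(v S^{-1/j}) = \exp(v^j)$ for $v\ge 0$ after the change of variable. Then
\[
\varphi_Z(u) = E_W\, E_S \exp\bigl(u W S^{-1/j}\bigr) = E_W \exp\bigl((uW)^j\bigr) = E_W \exp\bigl(u^j W^j\bigr) = \exp(u^j),
\]
the last equality because $W^j = 1$ almost surely. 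This is slicker, but it requires quoting the precise form of the stable Laplace transform and justifying the extension from $-tS$ to the complex argument $uWS^{-1/j}$; given that the previous proposition already hands us the moments on a plate, I would present the moment-series computation as the main proof and mention the subordinator viewpoint as a remark.
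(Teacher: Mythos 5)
Your main argument is exactly the paper's proof: expand $\exp(uZ)$, integrate term by term, and insert the moments \eqref{eq:Zmoments}. The extra care you take over the interchange is sound — dominating by $E\exp(u|Z|)=E\exp\bigl(uS^{-1/j}\bigr)$ and using Tonelli/dominated convergence works, since the dominating series has $k$-th term $\frac{u^{k}}{k!}E\bigl(S^{-k/j}\bigr)=\frac{u^{k}}{\Gamma(k/j+1)}$, an entire (Mittag--Leffler type) series; do use that computation rather than the parenthetical ``it must be finite since the final answer is,'' which is circular.

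One warning about the ``cleaner'' alternative you sketch: the identity $E\exp\bigl(vS_{1/j}^{-1/j}\bigr)=\exp(v^{j})$ is false, and no change of variables turns the Laplace transform $E\exp(-tS)=\exp(-t^{1/j})$ into a statement about $E\exp\bigl(vS^{-1/j}\bigr)$. By the fractional moments of $S$, one has $E\exp\bigl(vS^{-1/j}\bigr)=\sum_{k\ge0}v^{k}/\Gamma(k/j+1)$, which strictly exceeds $\exp(v^{j})=\sum_{p\ge0}v^{pj}/\Gamma(p+1)$ for $v>0$ because the latter is just the sub-series with $k=pj$. Concretely, for $j=2$ one has $S_{1/2}^{-1/2}\sim\sqrt{2}\,|N|$ and $E\exp\bigl(v\sqrt{2}\,|N|\bigr)=e^{v^{2}}\bigl(1+\operatorname{erf}(v)\bigr)\ne e^{v^{2}}$. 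The factor $\exp(u^{j})$ only appears after averaging over $W$, which kills all moments of order not divisible by $j$; the root-of-unity average is doing real work and cannot be bypassed by conditioning on $W$ and quoting the stable Laplace transform. So keep the moment-series computation as the proof and drop (or substantially repair) that remark.
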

\begin{proof}
A straightforward computation gives\[
\varphi_{Z}\left(u\right)=\sum_{k=0}^{+\infty}\frac{u^{k}}{k!}EZ^{k}=\sum_{p=0}^{+\infty}\frac{u^{pj}}{pj!}\frac{pj!}{p!}=\exp\left(u^{j}\right).\]

\end{proof}
From this property, we deduce that the random variable $Z$ exhibits
the following stability property. 
\begin{prop}
If $Z_{1}$ and $Z_{2}$ are two independent random variables distributed
as in (\ref{eq:Z}) with parameter $j$ and if $a_{1}$ and $a_{2}$
are two real positive numbers then the random variable\[
a_{1}Z_{1}+a_{2}Z_{2}\sim\left(a_{1}^{j}+a_{2}^{j}\right)^{\frac{1}{j}}Z\]
where $Z$ is again distributed as in (\ref{eq:Z}) with same parameter
$j$ and $\sim$ denotes equality in distribution. 
\end{prop}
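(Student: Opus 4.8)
The plan is to prove the stability property by computing generating functions and invoking uniqueness. First I would write, for $u \ge 0$,
\[
E\exp\left(u(a_1 Z_1 + a_2 Z_2)\right) = E\exp\left((ua_1) Z_1\right)\, E\exp\left((ua_2) Z_2\right),
\]
using the independence of $Z_1$ and $Z_2$. By the previous proposition (the generating function of $Z$ is $\exp(u^j)$ for $u\ge 0$), each factor evaluates explicitly: the first equals $\exp\left((ua_1)^j\right)$ and the second equals $\exp\left((ua_2)^j\right)$, both valid since $a_1,a_2 \ge 0$ so that $ua_1, ua_2 \ge 0$. Multiplying gives
\[
E\exp\left(u(a_1 Z_1 + a_2 Z_2)\right) = \exp\left(u^j(a_1^j + a_2^j)\right) = \exp\left(\left((a_1^j+a_2^j)^{1/j}\,u\right)^j\right),
\]
which is exactly $\varphi_Z\left((a_1^j+a_2^j)^{1/j}u\right)$, the generating function of $(a_1^j+a_2^j)^{1/j} Z$.

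Next I would conclude equality in distribution from equality of these generating functions. The cleanest route, consistent with the style of the paper, is to compare moments: since $\varphi$ is here the (formal or genuine) exponential generating function of the moment sequence, matching $\varphi$ for all $u\ge 0$ forces $E(a_1 Z_1 + a_2 Z_2)^k = \left((a_1^j+a_2^j)^{1/j}\right)^k EZ^k$ for every $k$, and by Proposition on the moments of $Z$ these moments grow slowly enough (they vanish unless $k=pj$, where they equal $(pj)!/p!$) that the distribution is determined by its moments. Alternatively one can expand $a_1Z_1+a_2Z_2$ binomially and check the moment identity directly, again using $EW_1^m EW_2^n = 0$ unless both $m$ and $n$ are multiples of $j$; this makes the cross terms drop out and reproduces the same combinatorial factor, but it is messier, so I would relegate it to a remark.

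The main obstacle is a subtle one: the generating function $\varphi_Z(u) = E\exp(uZ)$ is only asserted for $u \ge 0$, reflecting that $Z$ is heavy-tailed (its law comes from an inverse power of a one-sided stable variable) so $E\exp(uZ)$ may diverge for $u<0$ or complex $u$. Hence I cannot simply invoke a characteristic-function uniqueness theorem, and I must be careful that "equality of one-sided generating functions" genuinely pins down the distribution. The honest fix is to argue at the level of moments: both sides of the claimed identity have the same moment sequence $m_k$, and because $Z$ (being $W$ times a positive variable whose inverse powers have the stated factorial moments) has a moment sequence satisfying a Carleman-type growth bound, the moment problem is determinate, so equality of all moments yields equality in law. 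I would state this as the key step and cite the determinacy along with the earlier moment proposition, keeping the computation itself brief since it is routine.
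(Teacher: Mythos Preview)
The paper does not give a separate proof of this proposition: it simply says ``From this property, we deduce that the random variable $Z$ exhibits the following stability property'' immediately after computing $\varphi_Z(u)=\exp(u^j)$. Your generating-function computation is precisely the argument the paper intends, so in that sense your approach matches the paper's.

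Where you go further is in worrying about the uniqueness step---whether equality of the one-sided generating function (equivalently, equality of all moments) really forces equality in distribution for this complex-valued $Z$. The paper ignores this point entirely and works at the formal level of moment generating functions throughout. Your Carleman/moment-determinacy discussion is more careful than anything in the paper, but note that for a $\mathbb{C}$-valued random variable the usual moment-problem machinery does not apply verbatim; if you want genuine rigor you would need to argue determinacy for the joint law of $(\mathrm{Re}\,Z,\mathrm{Im}\,Z)$, or simply adopt the paper's stance and treat the identity as a statement about moment sequences/formal generating functions.
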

We note that $Z$ is a complex valued random variable; for real valued
random variables, the stability property can hold only for values
of the parameter $j$ such that $0<j\le2$ (see \cite[p. 170]{Feller}).

\subsection{application to the operator $\exp\left[\left(c\frac{d}{dx}\right)^{j}\right]$}

We can now give a simplified version of the formula (\ref{eq:Im})
as follows. 
\begin{thm}
The operator $I_{j}=\exp\left[\left(c\frac{d}{dx}\right)^{j}\right]$
acts on a function $f$ as\begin{equation}
I_{j}f\left(x\right)=E_{Z_{j}}f\left(x+cZ\right)\label{eq:Imsimplified}\end{equation}
 where $Z_{j}$ is defined as in (\ref{eq:Z}), for any function $f$
such that the right-hand side of this equality exists.\end{thm}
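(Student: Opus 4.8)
The plan is to verify the identity by expanding both sides as power series in $c$ and matching coefficients, using the moment formula for $Z$ established in the first proposition of this section. First I would write the left-hand side as
\[
I_{j}f(x)=\exp\left[\left(c\frac{d}{dx}\right)^{j}\right]f(x)=\sum_{p=0}^{\infty}\frac{c^{pj}}{p!}\,f^{(pj)}(x),
\]
which is just the formal power series of the exponential applied to the operator $(c\,d/dx)^{j}$, noting that only multiples of $j$ appear in the exponent of the derivative. For the right-hand side I would expand $f(x+cZ)$ in a Taylor series around $x$,
\[
E_{Z}f(x+cZ)=\sum_{k=0}^{\infty}\frac{c^{k}}{k!}\,f^{(k)}(x)\,E Z^{k},
\]
and then invoke the moment formula $EZ^{k}=0$ unless $k=pj$, in which case $EZ^{pj}=(pj)!/p!$. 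Substituting, the $k=pj$ term contributes $\frac{c^{pj}}{(pj)!}f^{(pj)}(x)\cdot\frac{(pj)!}{p!}=\frac{c^{pj}}{p!}f^{(pj)}(x)$, which matches the left-hand side term by term.

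A cleaner, essentially equivalent route is to observe that the whole computation is already encoded in the generating function proposition: since $\varphi_{Z}(u)=E\exp(uZ)=\exp(u^{j})$, replacing the scalar $u$ by the operator $c\,d/dx$ (legitimate at the level of formal power series applied to $f$) immediately gives $E_{Z}\exp(cZ\,d/dx)=\exp((c\,d/dx)^{j})$, and then using that $\exp(cZ\,d/dx)$ is the translation operator sending $f(x)$ to $f(x+cZ)$ yields the claim. I would present the proof in this second form, with the first expansion as the justification that the operator substitution is valid, since it makes the role of each earlier proposition transparent.

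The main obstacle is not the formal algebra but the analytic justification: one must argue that the interchange of expectation and infinite summation (equivalently, of $E_{Z}$ and the Taylor expansion of $f$) is valid, and that the series actually converges. Because $Z$ is heavy-tailed — it involves the negative power $S^{-1/j}$ of a one-sided stable variable, and $Z$ has finite moments of all orders but those moments $(pj)!/p!$ grow factorially — the series $\sum c^{pj}f^{(pj)}(x)/p!$ need not converge for all $f$; this is exactly why the theorem is stated with the proviso ``for any function $f$ such that the right-hand side of this equality exists.'' I would therefore phrase the argument so that it holds rigorously whenever $f$ is, say, entire of sufficiently small exponential type (or a polynomial, or more generally whenever $E_{Z}|f(x+cZ)|<\infty$ and dominated convergence applies), and otherwise treat the identity as one of formal power series in $c$, which is all that is needed for the applications to Gould--Hopper polynomials and coherent states in the later sections.
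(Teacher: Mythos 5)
Your proposal is correct and is essentially the paper's own argument: the ``cleaner route'' you describe (replacing $u$ by the operator $c\,\frac{d}{dx}$ in $E_{Z}\exp(uZ)=\exp(u^{j})$ and then using that $\exp\bigl(cZ\frac{d}{dx}\bigr)$ translates $f(x)$ to $f(x+cZ)$) is exactly the one-line proof given in the paper. Your term-by-term moment expansion and the remark on when the interchange of expectation and summation is justified simply make explicit what the paper leaves implicit under the hypothesis that the right-hand side exists.
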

\begin{proof}
By definition, with $Z$ defined as in (\ref{eq:Z}), \[
I_{j}f\left(x\right)=\exp\left[\left(c\frac{d}{dx}\right)^{j}\right]f\left(x\right)=E_{Z_{j}}\exp\left(cZ\frac{d}{dx}\right)f\left(x\right)=E_{Z_{j}}f\left(x+cZ\right).\]

\end{proof}
We note that the link between equalities (\ref{eq:Im}) and (\ref{eq:Imsimplified})
can be identified using Williams' formula \cite[end of part 2]{Williams}:
formula (\ref{eq:Im}) can indeed be rewritten as\[
I_{j}f\left(x\right)=E_{W}E_{Y_{\frac{1}{j}},Y_{\frac{2}{j}},\dots,Y_{\frac{j-1}{j}}}f\left(x+cW\left(Y_{\frac{1}{j}}Y_{\frac{2}{j}}\dots Y_{\frac{j-1}{j}}\right)^{\frac{1}{j}}\right)\]
 where $Y_{\frac{1}{j}},Y_{\frac{2}{j}},\dots,Y_{\frac{j-1}{j}}$
is the product of $j-1$ independent Gamma random variables with respective
shape parameters $\frac{1}{j},\frac{2}{j},\dots,\frac{j-1}{j}.$ But
by Williams' formula,\[
Y_{\frac{1}{j}}Y_{\frac{2}{j}}\dots Y_{\frac{j-1}{j}}\sim\left(\frac{1}{j}\right)^{j}S_{\frac{1}{j}}^{-1}\]
where $S_{\frac{1}{j}}$ is a stable random variable with parameter
$\frac{1}{j}$, so that \[
I_{j}f\left(x\right)=E_{W,S_{\frac{1}{j}}}f\left(x+cWS_{\frac{1}{j}}^{-\frac{1}{j}}\right)\]
which is exactly (\ref{eq:Imsimplified}).


As an example, we consider the case $j=2$ for which $W$
is Bernoulli distributed with parameter $\frac{1}{2}$ and since
$S_{\frac{1}{2}}$ is Lévy distributed, according to \cite[p.456]{Devroye},
\[
\frac{1}{\sqrt{S_{\frac{1}{2}}}}\sim\sqrt{2}\vert N\vert\]
where $N$ is Gaussian with variance $\sigma_{N}^{2}=1$ so that $Z_{2}=W_{2}S_{\frac{1}{2}}^{-\frac{1}{2}}$
is itself Gaussian with variance $\sigma_{Z}^{2}=2$ and we recover
the Gauss-Weierstrass transform operator \[
I_{2}f\left(x\right)=Ef\left(x+cZ_{2}\right).\]

\section{Properties of the Gould Hopper polynomials}

As an application of the former result, we consider the particular
case $f\left(x\right)=x^{n}.$ The polynomials \begin{equation}
g_{n}^{m}\left(x,h\right)=\exp\left[h\left(\frac{d}{dx}\right)^{m}\right]x^{n},\,\,\,\left(m,n\right)\in\mathbb{N}^{2}\label{eq:GouldHopper}\end{equation}
are known as the Gould-Hopper polynomials, as introduced in \cite[p.58]{Gould}.
The particular case $m=2$ corresponds to the Kampé de Fériet polynomials,
and the polynomials $g_{n}^{2}\left(2x,-1\right)$ coincide with the
classical Hermite polynomials. From the results of section \ref{sec:operator}
and with $h=c^{m}$, we deduce\begin{equation}
g_{n}^{m}\left(x,h\right)=E_{Z}\left(x+h^{\frac{1}{m}}Z\right)^{n}\label{eq:gnm}\end{equation}
where $Z$ is distributed as in (\ref{eq:Z}) with $j=m$ \footnote{we replace here $j$ by $m$ to stick to the notation introduced by Gould}. We note
that this representation is different from the one given by \cite[Thm.1]{Nadarajah},
which seems of little use since it involves integer moments of stable
random variables.

We show now that the representation (\ref{eq:gnm}) allows to recover
and extend easily some well-known properties of the GH polynomials.
\begin{enumerate}
\item the generating function of the GH poynomial is \cite[(6.3)]{Gould}\begin{eqnarray*}
\sum_{n=0}^{+\infty}g_{n}^{m}\left(x,h\right)\frac{t^{n}}{n!} & = & E_{Z}\sum_{n=0}^{+\infty}\left(x+h^{\frac{1}{m}}Z\right)^{n}\frac{t^{n}}{n!}\\
 & = & E_{Z}\exp\left(t\left(x+h^{\frac{1}{m}}Z\right)\right)=\exp\left(tx+ht^{m}\right);\end{eqnarray*}

\item the derivative of a GH polynomial \cite[(6.4)]{Gould} is easily computed
as \[
\frac{d}{dx}g_{n}^{m}\left(x,h\right)=nE_{Z}\left(x+h^{\frac{1}{m}}Z\right)^{n-1}=ng_{n-1}^{m}\left(x,h\right).\]

\item the GH polynomials satisfy the following addition theorem\begin{equation}
g_{n}^{m}\left(\sum_{i=1}^{r}x_{i},\sum_{i=1}^{r}h_{i}\right)=\sum_{n_{1}+\dots+n_{r}=n}\prod_{k=1}^{r}g_{n_{k}}^{m}\left(x_{k},h_{k}\right),\label{eq:addition Gould}\end{equation}
the proof of which is straightforward using the representation (\ref{eq:gnm}).
This result generalizes the case $r=2$ obtained by Gould and Hopper
under the form \cite[(6.19)]{Gould}\[
\sum_{k=0}^{n}\binom{n}{k}g_{k}^{m}\left(x,h\right)g_{n-k}^{m}\left(y,h\right)=g_{n}^{m}\left(x+y,2h\right).\]
Moreover, the classical addition theorem for Hermite polynomials \cite[(40) p. 196]{Erdelyi}\[
\frac{\left(\sum_{k=1}^{r}a_{k}^{2}\right)^{\frac{n}{2}}}{n!}H_{n}\left(\frac{\sum_{k=1}^{r}a_{k}x_{k}}{\sqrt{\sum_{k=1}^{r}a_{k}^{2}}}\right)=\sum_{n_{1}+\dots+n_{r}}\prod_{k=1}^{r}\frac{a^{m_{k}}}{m_{k}!}H_{m_{k}}\left(x_{k}\right)\]
is easily deduced from (\ref{eq:addition Gould}) using the link between
Hermite polynomials and Gould polynomials mentioned above. 
\item Another result from \cite[(6.9)]{Gould}, namely\[
\exp\left(h\frac{d^{j}}{dx^{j}}\right)\left(x^{n}\exp\left(tx\right)\right)=\frac{d^{n}}{dt^{n}}\left(\exp\left(tx+ht^{j}\right)\right)\]
can be recovered easily using (\ref{eq:Imsimplified}) since\begin{eqnarray*}
\exp\left(h\frac{d^{j}}{dx^{j}}\right)\left(x^{n}\exp\left(tx\right)\right) & = & E_{Z}\left(x+h^{\frac{1}{j}}Z\right)^{n}\exp\left(t\left(x+h^{\frac{1}{j}}Z\right)\right)\\
 & = & \frac{d^{n}}{dt^{n}}E_{Z}\exp\left(t\left(x+h^{\frac{1}{j}}Z\right)\right)=\frac{d^{n}}{dt^{n}}\left(\exp\left(tx+ht^{j}\right)\right).\end{eqnarray*}

\end{enumerate}

\section{Application to Coherent and Squeezed States}

\subsection{Multisection of series}

In calculations of higher-order and squeezed states quantities, the
following generating functions of Hermite polynomials appear - we
follow here the notation by Nieto and Truax:\begin{eqnarray}
S\left(j,k,z\right) & = & \sum_{n=0}^{+\infty}\frac{z^{jn+k}}{\left(jn+k\right)!}\label{eq:S(j,k,z)}\\
G\left(j,k,x,z\right) & = & \sum_{n=0}^{+\infty}\frac{z^{jn+k}H_{jn+k}\left(x\right)}{\left(jn+k\right)!}\label{eq:G(j,k,x,z)}\end{eqnarray}
and\begin{equation}
g\left(j,k,x,y,z\right)=\sum_{n=0}^{+\infty}\frac{z^{jn+k}H_{jn+k}\left(x\right)H_{jn+k}\left(y\right)}{\left(jn+k\right)!}\label{eq:g(j,k,x,y,z)}\end{equation}
 where $H_{n}\left(x\right)$ is the Hermite polynomial of degree
$n,$ and $j$ and $k$ are integers.

These quantities are all computed in \cite{Nieto} using an operational
calculus approach, but we show here that the probabilistic approach
introduced in Section \ref{sec:operator} gives a new insight on these
formulas and allows to extend them to other classes of polynomials.

We introduce the following theorem.
\begin{thm}
\label{thm:general1}If \[
\phi\left(t,x\right)=\sum_{n=0}^{+\infty}t^{n}f_{n}\left(x\right)\]
is the generating function of the sequence of functions $\left\{ f_{n}\left(x\right)\right\} $
then %
\footnote{we recall the notation $E_{W_{j}}f\left(W\right)=\frac{1}{j}\sum_{l=0}^{j-1}f\left(\omega_{j}^{l}\right)$%
}\[
\sum_{n=0}^{+\infty}t^{jn+k}f_{jn+k}\left(x\right)=E_{W_{j}}W^{-k}\phi\left(tW,x\right)\]
where $W$ is distributed as in (\ref{eq:W}).\end{thm}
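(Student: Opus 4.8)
The plan is to invoke the classical roots-of-unity multisection filter, now read probabilistically. First I would substitute the defining series of $\phi$ into the right-hand side. Since $\lvert W\rvert=1$ almost surely, the series $\phi(tW,x)=\sum_{n\ge 0}(tW)^{n}f_{n}(x)$ converges (absolutely) for exactly the same values of $t$ as $\phi(t,x)$, and therefore
\[
W^{-k}\phi(tW,x)=\sum_{n=0}^{+\infty}t^{n}W^{n-k}f_{n}(x).
\]

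Next I would apply $E_{W_{j}}$ term by term. Because $E_{W_{j}}$ is, by the footnote convention, a finite average over the $j$ roots of unity, interchanging it with the (absolutely convergent) sum is immediate and produces $\sum_{n\ge 0}t^{n}f_{n}(x)\,E_{W_{j}}W^{n-k}$. The heart of the argument is then the orthogonality relation already used to prove the moment formula (\ref{eq:Zmoments}):
\[
E_{W_{j}}W^{n-k}=\frac{1}{j}\sum_{l=0}^{j-1}\omega_{j}^{l(n-k)}=\begin{cases}1 & \text{if } n\equiv k \pmod{j},\\[2pt] 0 & \text{otherwise.}\end{cases}
\]
Keeping only the surviving indices $n=jm+k$ with $m\in\mathbb{N}$ collapses the sum to $\sum_{m\ge 0}t^{jm+k}f_{jm+k}(x)$, which is the left-hand side after renaming $m$ back to $n$.

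I do not anticipate a genuine obstacle; the single point deserving a word of care is the range of $t$. One works with $t$ in the region of absolute convergence of $\phi(\cdot,x)$, which is stable under $t\mapsto tW$ precisely because $W$ is unimodular, so every step above is legitimate there. Alternatively, if one wishes to avoid any analytic hypotheses, the identical computation goes through verbatim at the level of formal power series in $t$, with $E_{W_{j}}$ acting coefficient by coefficient; this is the reading implicitly needed when the theorem is later applied with $\phi$ the (entire) generating functions of Hermite-type polynomials.
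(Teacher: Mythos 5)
Your proof is correct and follows essentially the same route as the paper: expand $\phi(tW,x)$, swap the finite average $E_{W_{j}}$ with the sum, and apply the roots-of-unity orthogonality $E_{W_{j}}W^{n-k}=1$ if $j\mid(n-k)$ and $0$ otherwise (the same fact the paper invokes via the moment computation behind (\ref{eq:Zmoments})). Your added remarks on the domain of $t$ and the formal-power-series reading are a harmless refinement of the paper's terser argument.
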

\begin{proof}
We compute \[
E_{W_{j}}W^{-k}\phi\left(tW,x\right)=\sum_{n=0}^{+\infty}E_{W_{j}}W^{n-k}t^{n}f_{n}\left(x\right)\]
and use the property (\ref{eq:Zmoments}) to deduce the result.
\end{proof}
We note that this theorem is the probabilistic formulation of the
technique of {}``multisection of series'' described by Riordan \cite[section 4.3]{Riordan}
as a {}``process of ancient vintage''. This process is also characterized
in \cite[(19)]{Dattoli} as a consequence of the sieving principle.

\subsection{Applications}

As an application of this theorem, we recover easily the results in
\cite{Nieto}:
\begin{enumerate}
\item choosing $f_{n}\left(x\right)=\frac{1}{n!}\,\,\forall x\in\mathbb{R},\,\,\forall n\in\mathbb{N}$
we deduce\[
\phi\left(t,x\right)=\exp\left(t\right)\]
and\[
S\left(j,k,z\right)=E_{W_{j}}W^{-k}\exp\left(zW\right)\]

\item choosing $f_{n}\left(x\right)=\frac{H_{n}\left(x\right)}{n!},$ we
deduce the well-known generating function of the Hermite polynomials\[
\phi\left(t,x\right)=\exp\left(2tx-t^{2}\right)\]
so that \[
G\left(j,k,x,z\right)=E_{W_{j}}W^{-k}\exp\left(2Wxz-z^{2}W^{2}\right)\]

\item the more general case \[
\sum_{n=0}^{+\infty}\frac{t^{jn+k}}{\left(jn+k\right)!}H_{jn+k+m}\left(x\right)\]
can also be easily derived: choosing $f_{n}\left(x\right)=\frac{H_{n+m}\left(x\right)}{n!},$
we deduce the generating function \cite[(1) p.197]{Rainville}\[
\phi\left(t,x\right)=\exp\left(2tx-t^{2}\right)H_{m}\left(x-t\right)\]
so that \[
\sum_{n=0}^{+\infty}\frac{t^{jn+k}}{\left(jn+k\right)!}H_{jn+k+m}\left(x\right)=E_{W_{j}}W^{-k}\exp\left(2tWx-t^{2}W^{2}\right)H_{m}\left(x-tW\right)\]
which coincides with \cite[(21)]{Dattoli}.
\item remarking that the variable $x$ in theorem (\ref{thm:general1})
may be multidimensional, choosing $x=\left(x_{1},x_{2}\right)$ and
\[
f\left(x\right)=\frac{H_{n}\left(x_{1}\right)H_{n}\left(x_{2}\right)}{n!}\]
we deduce, by the Mehler formula \cite[(22) p.194]{Erdelyi},\[
\phi\left(t,x_{1},x_{2}\right)=\frac{1}{\sqrt{1-4t^{2}}}\exp\left(-4\frac{t^{2}x^{2}+t^{2}y^{2}-txy}{1-4t^{2}}\right)\]
so that \[
g\left(j,k,x,y,z\right)=E_{W_{j}}W^{-k}\frac{1}{\sqrt{1-4t^{2}W^{2}}}\exp\left(-4\frac{t^{2}W^{2}x^{2}+t^{2}W^{2}y^{2}-tWxy}{1-4t^{2}W^{2}}\right).\]

\end{enumerate}
We remark that Theorem \ref{thm:general1} is not restricted to Hermite
polynomials; let us give another few examples:
\begin{enumerate}
\item for the Gould Hopper polynomials $g_{n}^{m}\left(x,h\right)$, choosing
$f_{n}\left(x\right)=\frac{g_{n}^{m}\left(x,h\right)}{n!}$ as defined
in \eqref{eq:GouldHopper}, we deduce\[
\sum_{n=0}^{+\infty}\frac{t^{jn+k}}{\left(jn+k\right)!}g_{jn+k}\left(x,h\right)=E_{W_{j}}W^{-k}\exp\left(txW+ht^{m}W^{m}\right)\]

\item for the Laguerre polynomials, with $f_{n}\left(x\right)=L_{n}^{\alpha}\left(x\right)$
we have $\phi\left(t,x\right)=\left(1-t\right)^{-\alpha-1}\exp\left(\frac{xt}{t-1}\right)$
so that\[
\sum_{n=0}^{+\infty}\frac{t^{jn+k}}{\left(jn+k\right)!}L_{jn+k}^{\alpha}\left(x\right)=E_{W_{j}}W^{-k}\left(1-tW\right)^{-\alpha-1}\exp\left(\frac{xtW}{tW-1}\right).\]

\end{enumerate}

\subsection{Further sums}

In their study, Nieto and Truax consider also the sums \begin{equation}
K\left(j,k,p,q,x,z\right)=\sum_{n=0}^{+\infty}\frac{z^{jn+k}H_{jn+k}\left(x\right)}{\left(pn+q\right)!}\label{eq:K(j,k,p,q,x,z)}\end{equation}
for which they explicit some specific cases. We give here the general
result only in the cases $j=p$ and $j=2p$ by looking first at the
simple sum\begin{equation}
k\left(j,k,p,q,z\right)=\sum_{n=0}^{+\infty}\frac{z^{jn+k}}{\left(pn+q\right)!}.\label{eq:k(j,j,p,q,z)}\end{equation}

A straightforward computation shows that this sum is related to the
sum $S\left(j,k,z\right)$ in (\ref{eq:S(j,k,z)}) as\begin{equation}
k\left(j,k,p,q,z\right)=z^{k-j\frac{q}{p}}S\left(p,q,z^{\frac{j}{p}}\right).\label{eq:k}\end{equation}

As a generalization of this result, we obtain the following
\begin{thm*}
If the sequence of functions $\left\{ h_{n}\left(x\right)\right\} $
can be expressed as a sequence of moments \[
h_{n}\left(x\right)=E_{U}\varphi^{n}\left(x,U\right)\]
for some function $\varphi,$ then\begin{equation}
\sum_{n=0}^{+\infty}\frac{z^{jn+k}}{\left(pn+q\right)!}h_{jn+k}\left(x\right)=E_{W_{p},U}\left\{ \left(z\varphi\left(x,U\right)\right)^{k-j\frac{q}{p}}W^{-q}\exp\left(\left(z\varphi\left(x,U\right)\right)^{\frac{j}{p}}W\right)\right\} .\label{eq:jnkpnq}\end{equation}
\end{thm*}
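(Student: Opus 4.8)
The plan is to reduce the claim to two facts already established above: the re-indexing identity (\ref{eq:k}), and the probabilistic representation $S\left(p,q,v\right)=E_{W_{p}}W^{-q}\exp\left(vW\right)$ obtained from Theorem \ref{thm:general1} (applied with $f_{n}=\frac{1}{n!}$, in the form written just after it).

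First I would substitute the hypothesis $h_{jn+k}\left(x\right)=E_{U}\varphi^{jn+k}\left(x,U\right)$ into the left-hand side of (\ref{eq:jnkpnq}) and interchange the summation with the expectation $E_{U}$, which is legitimate under the convergence assumptions tacitly in force throughout this section. Setting $w=z\varphi\left(x,U\right)$, the inner sum is then $\sum_{n\ge0}w^{jn+k}/\left(pn+q\right)!$, which by the definition (\ref{eq:k(j,j,p,q,z)}) equals $k\left(j,k,p,q,w\right)$, so the left-hand side becomes $E_{U}\,k\left(j,k,p,q,z\varphi\left(x,U\right)\right)$.

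Next I would apply (\ref{eq:k}), namely $k\left(j,k,p,q,w\right)=w^{k-j\frac{q}{p}}S\left(p,q,w^{\frac{j}{p}}\right)$, which is itself the immediate consequence of the factorisation $w^{jn+k}=w^{k-j\frac{q}{p}}\left(w^{\frac{j}{p}}\right)^{pn+q}$, and then insert $S\left(p,q,v\right)=E_{W_{p}}W^{-q}\exp\left(vW\right)$ with $v=\left(z\varphi\left(x,U\right)\right)^{\frac{j}{p}}$. Since $W_{p}$ is independent of $U$, the iterated expectation $E_{U}E_{W_{p}}$ collapses into a single $E_{W_{p},U}$, and one reads off precisely (\ref{eq:jnkpnq}).

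The one point that demands care --- and the main obstacle --- is the interpretation of the fractional powers $w^{k-j\frac{q}{p}}$ and $w^{\frac{j}{p}}$ when $w=z\varphi\left(x,U\right)$ is negative or complex-valued: one must fix a branch of the power function so that raising $w^{\frac{j}{p}}$ to the $p$-th power reproduces $w^{j}$, exactly as is implicitly assumed when (\ref{eq:k}) is stated, and one should verify that the two successive substitutions telescope the exponents consistently. Everything else is routine bookkeeping.
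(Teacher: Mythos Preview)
Your proposal is correct and follows exactly the paper's own argument: the paper's proof is the one-line remark that the result follows by replacing $z$ with $z\varphi\left(x,U\right)$ in (\ref{eq:k}) and taking expectation over $U$, which is precisely the chain of substitutions you spell out. Your additional caveat about choosing a consistent branch for the fractional powers is well taken, though the paper does not address it.
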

\begin{proof}
The proof is straightforward replacing $z$ by $z\varphi\left(x,U\right)$
in (\ref{eq:k}) and taking expectation over $U.$
\end{proof}
As a consequence, using the moment representation of the Hermite polynomials\[
H_{n}\left(x\right)=2^{n}E_{N}\left(x+iN\right)^{n}\]
where $N$ is a Gaussian random variable with variance $\sigma_{N}^{2}=\frac{1}{2},$
we deduce

\begin{equation}
K\left(j,k,p,q,x,z\right)=\left(2z\right)^{k-j\frac{q}{p}}E_{W_{p},N}W^{-q}\left(x+iN\right)^{k-j\frac{q}{p}}\exp\left(\left(2z\left(x+iN\right)\right)^{\frac{j}{p}}W\right)\label{eq:jnkpnqHermite}\end{equation}
The expectation with respect to the Gaussian variable $N$ is difficult
to obtain for an arbitrary value of the ratio $\frac{j}{p}$ so that
in the following, we give explicit expressions only in the cases $j=p$
and $j=2p.$

\subsubsection{the case $j=p$}

In this case we obtain the following result.
\begin{thm}
If $k-q\in\mathbb{N},$ the function $K\left(j,k,j,q,x,z\right)$
reads\begin{eqnarray*}
K\left(j,k,j,q,x,z\right) & = & z^{k-q}E_{W_{j}}W^{-q}\exp\left(-zW\left(zW-2x\right)\right)H_{k-q}\left(x-zW\right)\end{eqnarray*}
\end{thm}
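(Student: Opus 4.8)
The plan is to specialize the general starred theorem to the Hermite case and then carry out the Gaussian expectation explicitly when $j=p$. Starting from \eqref{eq:jnkpnqHermite} with $p=j$, the exponent $k-j\frac{q}{p}$ collapses to $k-q$, which is assumed to be a nonnegative integer, and the fractional power $\frac{j}{p}$ becomes $1$. Thus \eqref{eq:jnkpnqHermite} reduces to
\[
K\left(j,k,j,q,x,z\right)=\left(2z\right)^{k-q}E_{W_{j},N}W^{-q}\left(x+iN\right)^{k-q}\exp\left(2zW\left(x+iN\right)\right).
\]
The remaining task is to evaluate the inner expectation over the Gaussian variable $N$ with variance $\sigma_{N}^{2}=\frac12$, the variable $W$ being held fixed.

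Next I would recognize the $N$-expectation as (a constant times) a Gould--Hopper / Hermite moment representation read backwards. Concretely, for fixed $W$ set $a=2zW$; then by the moment representation of the Hermite polynomials stated just above \eqref{eq:jnkpnqHermite}, the quantity $2^{k-q}E_{N}(x+iN)^{k-q}e^{a(x+iN)}$ is exactly of the form appearing in item (4) of the Gould--Hopper section with $j=2$, $h=-1$, i.e. it is $\frac{d^{k-q}}{dt^{k-q}}$ of a Gaussian-type generating function evaluated at $t=a$. Pulling $e^{ax}$ out and completing the square in $N$ inside $E_N e^{iaN}(x+iN)^{k-q}$, one gets a factor $e^{-a^2\sigma_N^2}=e^{-a^2/2}$ times a polynomial in $x$ and $a$ of degree $k-q$; matching this against the classical relation $H_m(u)=2^m E_N(u+iN)^m$ and the shift formula $E_N f(x+iN)e^{iaN}=e^{-a^2/2}E_N f(x+iN-a)$ (valid for polynomial $f$, being just a translation-of-contour/generating-function identity) yields
\[
2^{k-q}E_{N}(x+iN)^{k-q}e^{2zW(x+iN)}=e^{2zWx}\,e^{-\,(2zW)^2/4}\,H_{k-q}\!\left(x-zW\right)=\exp\bigl(-zW(zW-2x)\bigr)\,H_{k-q}(x-zW),
\]
since $2zWx-(zW)^2=-zW(zW-2x)$. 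Substituting this back and keeping the surviving scalar factor $z^{k-q}$ (the $2^{k-q}$ having been absorbed into the Hermite normalization) gives precisely the claimed formula, with the outer $E_{W_j}W^{-q}$ untouched.

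The main obstacle is the clean justification of the Gaussian shift/completion-of-the-square step that converts $E_N (x+iN)^{k-q} e^{2zW(x+iN)}$ into $e^{-(zW)^2}e^{2zWx}H_{k-q}(x-zW)$. Since $k-q\in\mathbb{N}$ the integrand is a polynomial times a Gaussian-exponential in $N$, so all expectations are finite and the manipulation is legitimate; I would phrase it as: write $e^{2izWN}$ against the Gaussian density, complete the square to produce $e^{-(zW)^2}$ and a shifted Gaussian density in $N$ centered at $izW$, then use $H_m(u)=2^m E_{N}(u+iN)^m$ with the shifted argument $u=x-zW$. One should note that although $W$ is complex-valued, for each fixed value $W=\omega_j^l$ this is an identity between entire functions of the scalar $zW$, so analytic continuation from real argument (where it is the textbook Gauss--Weierstrass action on a polynomial) secures it; then averaging over $W$ with the weight $W^{-q}$ finishes the proof.
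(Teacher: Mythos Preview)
Your proposal is correct and follows the same overall strategy as the paper: specialize \eqref{eq:jnkpnqHermite} to $p=j$, then evaluate the remaining Gaussian expectation in $N$ for fixed $W$. The technical device you use for that Gaussian step differs slightly from the paper. The paper writes $(x+iN)^{k-q}e^{2zW(x+iN)}$ as $(2W)^{-(k-q)}\frac{d^{k-q}}{dz^{k-q}}e^{2zW(x+iN)}$, takes the $N$-expectation via the characteristic function to get $(2W)^{-(k-q)}e^{x^{2}}\frac{d^{k-q}}{dz^{k-q}}e^{-(zW-x)^{2}}$, and then recognizes the Rodrigues formula for $H_{k-q}$. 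You instead complete the square in the Gaussian integral (equivalently, shift the contour) and invoke the moment representation $H_m(u)=2^{m}E_N(u+iN)^{m}$ directly at the shifted argument $u=x-zW$. Both routes are standard and equally short; yours avoids Rodrigues, the paper's avoids the analytic-continuation remark for complex $zW$.

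One point to clean up: your intermediate statement of the Gaussian shift is off by the variance factor. Since $\sigma_N^{2}=\tfrac12$, the density is $\pi^{-1/2}e^{-N^{2}}$, so $e^{iaN}e^{-N^{2}}=e^{-(N-ia/2)^{2}}e^{-a^{2}/4}$, giving the factor $e^{-a^{2}/4}$ (not $e^{-a^{2}\sigma_N^{2}}=e^{-a^{2}/2}$) and a shift of the argument by $a/2$ (not by $a$). With $a=2zW$ this is exactly the $e^{-(zW)^{2}}$ and the shift $x\mapsto x-zW$ that appear in your displayed formula, so your final line is right; only the prose constants need correcting.
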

\begin{proof}
The proof is immediate remarking that in (\ref{eq:jnkpnqHermite})
the Gaussian expectation \begin{eqnarray*}
E_{N}\left(x+iN\right)^{k-q}\exp\left(2zW\left(x+iN\right)\right) & = & \left(2W\right)^{q-k}\frac{d^{k-q}}{dz^{k-q}}\exp\left(2xzW\right)E_{N}\exp\left(i2zWN\right)\end{eqnarray*}
involves the Gaussian characteristic function $E_{N}\exp\left(i2zWN\right)=\exp\left(-z^{2}W^{2}\right)$
so that it is equal to\[
\left(2W\right)^{q-k}\exp\left(x^{2}\right)\frac{d^{k-q}}{dz^{k-q}}\exp\left(-\left(zW-x\right)^{2}\right)\]
which, using Rodriguez formula for the Hermite polynomials, can be
expressed as\[
2^{q-k}\exp\left(x^{2}\right)\exp\left(-\left(zW-x\right)^{2}\right)\left(-1\right)^{k-q}H_{k-q}\left(zW-x\right)\]
from which we deduce \[
K\left(j,k,j,q,x,z\right)=z^{k-q}E_{W_{j}}W^{-q}\exp\left(-zW\left(zW-2x\right)\right)H_{k-q}\left(x-zW\right)\]

\end{proof}

\subsubsection{case $j=2p$}

This case can also be solved as follows.
\begin{thm}
If $k-2q\in\mathbb{N},$ the function $K\left(2p,k,p,q,x,z\right)$
reads\begin{eqnarray*}
K\left(2p,k,p,q,x,z\right) & = & z^{k-2q}E_{W_{p}}W^{-q}\frac{\exp\left(\frac{4z^{2}W}{1+4z^{2}W}\right)}{\left(1+4z^{2}W\right)^{\frac{k-2q+1}{2}}}H_{k-2q}\left(\frac{x}{\sqrt{1+4z^{2}W}}\right)\end{eqnarray*}
\end{thm}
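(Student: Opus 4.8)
The plan is to follow the same route as the proof of the case $j=p$, the only difference being that now the exponent $\frac{j}{p}=2$ enters the Gaussian expectation as a \emph{square}, which is exactly the situation handled by a Mehler-type identity rather than a translation. Starting from the general formula (\ref{eq:jnkpnqHermite}) with $j$ replaced by $2p$, I would first isolate the Gaussian expectation
\[
E_{N}\left(x+iN\right)^{k-2q}\exp\left(\left(2z\left(x+iN\right)\right)^{2}W\right)=E_{N}\left(x+iN\right)^{k-2q}\exp\left(4z^{2}W\left(x+iN\right)^{2}\right),
\]
and, as before, write $\left(x+iN\right)^{k-2q}$ as $\frac{1}{(2W')^{k-2q}}\frac{d^{k-2q}}{dz^{k-2q}}$ applied to an exponential — or more cleanly, differentiate with respect to an auxiliary parameter — so that everything is reduced to evaluating a single Gaussian integral of the form $E_N\exp\bigl(a(x+iN)^2\bigr)$ for a complex parameter $a=4z^2W$.

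The key computational step is therefore the completion-of-the-square for $E_N\exp\bigl(a(x+iN)^2+\lambda(x+iN)\bigr)$ with $N$ centred Gaussian of variance $\tfrac12$. Expanding, $a(x+iN)^2=ax^2+2iaxN-aN^2$, so the $N$-integrand is $\exp\bigl(-(a+1)N^2+i(2ax+\lambda)N\bigr)$ up to the $N$-free factor $\exp(ax^2+\lambda x)$; performing the Gaussian integral produces a factor $(1+a)^{-1/2}$ and an exponential quadratic in $x$ and $\lambda$. Substituting $a=4z^2W$ gives the prefactor $(1+4z^2W)^{-1/2}$, and after collecting terms the $x$-dependent exponent should simplify to something proportional to $\frac{x^2}{1+4z^2W}$, which matches the shape $\exp\!\bigl(\tfrac{4z^2W}{1+4z^2W}\bigr)$ once one also tracks the $\exp(x^2)$-type factors that were implicit in the Hermite moment representation; differentiating $k-2q$ times in the auxiliary variable and invoking Rodrigues' formula then converts the resulting Gaussian-weighted polynomial into $H_{k-2q}\!\left(\frac{x}{\sqrt{1+4z^2W}}\right)$, with the remaining powers of $(1+4z^2W)$ combining into the claimed exponent $\frac{k-2q+1}{2}$ (the extra $\tfrac12$ coming from the prefactor, and $\tfrac{k-2q}{2}$ from each differentiation rescaling the Hermite argument).

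I expect the main obstacle to be purely bookkeeping: correctly tracking all the powers of $2$, $W$, $z$, and $(1+4z^2W)$ through the $(k-2q)$-fold differentiation and the change of variables inside the Hermite polynomial, so that the final exponent of $(1+4z^2W)$ comes out as $\frac{k-2q+1}{2}$ and not some off-by-one variant. A secondary point worth a sentence is the validity of exchanging $E_N$ with the series/differentiation and of the Gaussian integral itself: this requires $\mathrm{Re}(1+4z^2W)>0$, i.e.\ a restriction on $z$ (and the root of unity $W$) ensuring convergence, which I would state as a hypothesis exactly as the ``$k-2q\in\mathbb{N}$'' hypothesis is already stated. Once the Gaussian expectation is in closed form, the outer expectation $E_{W_p}$ is left untouched, so no further work is needed there and the stated formula follows.
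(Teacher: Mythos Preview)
Your approach is correct and takes a genuinely different route from the paper's. The paper first substitutes $x\mapsto ix$, turning $(x+iN)$ into $i(x+N)$ so that the Gaussian expectation becomes the \emph{real} integral
\[
\frac{1}{\sqrt{\pi}}\int_{-\infty}^{+\infty}(x+N)^{k-2q}\exp\bigl(-4z^{2}W(x+N)^{2}\bigr)e^{-N^{2}}\,dN;
\]
after the change of variable $M=x+N$ this is a tabulated Gaussian integral (Prudnikov~2.3.15.10) quoted directly in Hermite form, and then $x\mapsto -ix$ restores the original variable. You instead stay with the complex representation, introduce an auxiliary linear parameter $\lambda$, evaluate $E_{N}\exp\bigl(a(x+iN)^{2}+\lambda(x+iN)\bigr)$ by completing the square, and recover $H_{k-2q}$ by differentiating $(k-2q)$ times in $\lambda$ --- which is precisely the Hermite generating-function identity (equivalently Rodrigues) in the rescaled variable $\lambda/(2\sqrt{1+a})$. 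Your route is self-contained (no table lookup); the paper's is shorter once the reference is granted. Two small corrections to your sketch: first, differentiating in $z$ does \emph{not} pull down factors of $(x+iN)$ here since the exponent is quadratic in $z$, so only the auxiliary-parameter variant works --- commit to it. Second, no $\exp(x^{2})$ factor is hidden in the moment representation $H_{n}(x)=2^{n}E_{N}(x+iN)^{n}$; the completion of the square already gives the exponent $\tfrac{4z^{2}W}{1+4z^{2}W}\,x^{2}$ directly, and the missing $x^{2}$ in the displayed theorem is simply a typo (it reappears at the end of the paper's own proof).
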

\begin{proof}
With $j=2p,$ replacing $x$ by $ix$, we need to compute the Gaussian
expectation\[
E_{N}\left(x+N\right)^{k-2q}\exp\left(-4z^{2}W\left(x+N\right)^{2}\right)=\frac{1}{\sqrt{\pi}}\int_{-\infty}^{+\infty}\left(x+N\right)^{k-2q}\exp\left(-4z^{2}W\left(x+N\right)^{2}\right)\exp\left(-N^{2}\right)dN\]
which, after change of variable $M=x+N$ reads\[
\frac{1}{\sqrt{\pi}}\exp\left(-x^{2}\right)\int_{-\infty}^{+\infty}M^{k-2q}\exp\left(-M^{2}\left(1+4z^{2}W\right)\right)\exp\left(2xM\right)dM.\]
This classical Gaussian integral is equal to \cite[2.3.15.10]{Prudnikov}\[
\left(\frac{i}{2}\right)^{k-2q}\exp\left(-x^{2}\right)\frac{\exp\left(\frac{x^{2}}{1+4z^{2}W}\right)}{\left(1+4z^{2}W\right)^{\frac{k-2q+1}{2}}}H_{k-2q}\left(\frac{ix}{\sqrt{1+4z^{2}W}}\right)\]
and we deduce, replacing $x$ by $\left(-ix\right)$,\[
K\left(2p,k,p,q,x,z\right)=z^{k-2q}E_{W_{p}}W^{-q}\frac{\exp\left(\frac{4z^{2}W}{1+4z^{2}W}x^{2}\right)}{\left(1+4z^{2}W\right)^{\frac{k-2q+1}{2}}}H_{k-2q}\left(\frac{x}{\sqrt{1+4z^{2}W}}\right).\]

\end{proof}




\section{Conclusion}

In this paper, we have shown that there exists a probabilistic counterpart
to the classical operational calculus. A further direction of research
is the extension of this approach to the multivariate case.

\end{document}